\documentclass[letterpaper, 10 pt,conference]{ieeeconf}  

\IEEEoverridecommandlockouts                              
\overrideIEEEmargins


\usepackage{booktabs}
\usepackage{booktabs}
\usepackage{tikz}
\usetikzlibrary{matrix}
\usetikzlibrary{arrows,automata}
\usepackage{amsmath,amssymb,url}
\usepackage{algorithm,algpseudocode}
\usepackage{graphics}
\usepackage{float}
\usepackage[T1]{fontenc}
\usepackage[utf8]{inputenc}
\usepackage{subcaption}
\usepackage{algpseudocode}
\usepackage{tabularx}
\usepackage{booktabs}
\usepackage{enumerate}
\usepackage{verbatim}
\usepackage{pgf}
\newtheorem{theorem}{Theorem}
\newtheorem{lemma}{Lemma}
\newtheorem{proposition}{Proposition}

\newtheorem{definition}{Definition}
\newtheorem{corollary}{Corollary}

\newcommand{\be}{\begin{equation}}
\newcommand{\ee}{\end{equation}}

\newcommand{\argmin}{\mathop{\rm argmin}}

\usepackage{relsize}
\usepackage{exscale}

\usepackage{setspace}
\usepackage{color}
\definecolor{light-gray}{gray}{0.95}

\long\def\old#1{}

\def \cut{\text{c}}
\def \homega{\hat{\omega}}

\title{\LARGE \bf
A lower bound on the performance of dynamic curing policies for epidemics on graphs}

\author{Kimon Drakopoulos \and Asuman Ozdaglar \and John N. Tsitsiklis
\thanks{{The} authors are with the Laboratory of Information and Decision Systems, Massachusetts Institute of Technology, 
Cambridge, MA 02139. Emails: \{kimondr,asuman,jnt\}@mit.edu}
\thanks{{Research partially supported by the ARO MURI W911NF-12-1-0509 and NSF grant CMMI-1234062. }}
}

\begin{document}

\maketitle

\begin{abstract}
 We consider an SIS-type epidemic process that evolves on a known graph. We assume that a fixed curing budget  can be allocated at each instant to the nodes of the graph, towards the objective of minimizing the expected extinction time of the epidemic. We provide a lower bound on the optimal expected extinction time  as a function of the available budget, the epidemic parameters, {the maximum degree,  and} the CutWidth of the graph. For graphs with large CutWidth (close to the largest possible), and under a budget which is sublinear in the number of nodes, our lower bound scales exponentially with the size of the graph. 
\end{abstract}

\section{Introduction}
We consider an SIS{-}type epidemic model with a common infection rate $\beta$ and an endogenous, node specific curing rate $\rho_i(t)$. 
{A network planner has a total curing budget, which is allocated to different   nodes at each point in time according to
a dynamic policy based on the history of the epidemic and  the network structure.} 

In earlier work \cite{DOTtnse14,DOTcdc14} we propose a policy which achieves rapid containment for any set of initially infected nodes under some conditions on the total curing budget. More specifically,  we prove that if the CutWidth of the underlying graph is much smaller than the available curing budget, then rapid containment of any epidemic is achieved  by the proposed policy.

In this paper we study the case where the CutWidth of the graph is larger than the available budget. Specifically, we focus on a class of graphs for which the CutWidth of the graph is  close to the largest possible value and obtain a lower bound on the optimal expected extinction time. When the curing budget scales sublinearly in the number of nodes, our bound implies that under any dynamic curing policy, the expected time to extinction scales exponentially in the {number of nodes}, in the worst case {where}   all nodes are initially  infected.  {This complements our results in \cite{DOT14}, where we provide (weaker) lower bounds for the case where the} CutWidth is larger than the available budget, linear in the {number of nodes,} but not necessarily close to the largest possible value.

A similar {model, but in which the curing rate allocation is {static}  (open-loop)}
 has been studied in \cite{Cohen,gourdin, chung,preciado}, 
and the proposed methods {were} either heuristic or based on  mean-field approximations of the evolution process. {Closer} to our work, {the authors of  \cite{Borgs10} let the curing rates be proportional} to the degree of each node --- {but again} independent of the current state of the network, which means that curing resources may be wasted on healthy nodes. On a graph with bounded degree, the policy in \cite{Borgs10} achieves sublinear  time to extinction, {but requires a} curing budget that is proportional to the number of nodes. Moreover, when the underlying graph is an expander and the curing budget is sublinear, {it is established in \cite{Borgs10}}  that the optimal expected extinction time scales exponentially in the size of the graph. We extend this result by obtaining an exponential lower bound for all graphs whose CutWidth is close to the largest possible. {This class of graphs contains expander graphs but is substantially larger. } Furthermore, our result also applies to dynamic policies.

The rest of the paper is organized as follows. In Section \ref{model} we present the details of our model. In Section \ref{sec:combinatorics} we study relevant graph-theoretic properties and present {some key combinatorial results that are used in our analysis.}  In Section \ref{sec:exponential} we present the lower bound on the optimal expected extinction time.

\section{The Model} \label{model}
We consider a {network}, represented by a {connected} {undirected} graph  $G=(V,\mathcal{E})$, where $V$ denotes the set of nodes and  $\mathcal{E}$ denotes the set of edges. {We use $n$ to denote the number of nodes.} Two nodes $u,v \in V$ are {\it neighbors} if $(u,v) \in \mathcal{E}$. We denote by $\Delta$ the maximum {of the node degrees.} To exclude 
{trivial cases, and without loss of generality, we assume throughout that $G$ is connected and, in particular,
$\Delta>0$.}

We assume that {the} nodes in a set $I_0{\subseteq V}$ are initially infected and {that} the infection spreads according to a controlled contact process where the rate at which infected {nodes} get cured is  determined by a network controller. Specifically, each {node} can be in one of two states: {\it infected}  or {\it healthy}. The controlled contact process --- {also {known} as the { controlled} SIS epidemic {model}} --- on $G$ is a 
{right-continuous}, continuous-time Markov process 
{$\{I_t\}_{t\geq 0}$ on the state space $\{0,1\}^V$, where $I_t$ stands for the set of infected nodes at time~$t$.} {We refer to $I_t$ as the {\it infection process}.}

State transitions at each node occur independently according to the following dynamics.
\begin{enumerate}[a)] \item {The process 
is initialized at the given initial state $I_0$.}
\item {If a node $v$ is healthy, i.e., if $v\notin I_t$, the transition rate associated with a change of the state of that node to being infected is  equal to an infection rate $\beta$ times the number of infected neighbors of $v$, that is,
$$\beta \cdot  \big|\{(u,v)\in \mathcal{E}: u\in I_t\}\big|,$$
where we use $|\cdot|$ to denote the cardinality of a set.}
{By rescaling time, we can and will assume throughout the paper that $\beta=1$.} 

\item {If a node $v$ is infected, i.e., if $v\in I_t$, the transition rate associated with a change of the state of that node to being healthy is equal to a curing} rate  
 $\rho_v(t)$ that is determined by {the} network controller, as a function of  the current and past states of the {process.} We are assuming here that the network controller has access to {the entire} past evolution of the process.
\end{enumerate}

We assume a {\it budget constraint} 
{of the form}    
\begin{equation}
\sum_{v \in V} \rho_v(t)\leq r,  \label{eq:budgetconstr}
\end{equation}
{at} each time instant $t$, reflecting the fact that curing is costly. 
 A {\it curing policy}  is a  mapping {which at any time $t$ maps the past history of the process to a curing vector $\rho(t)=\{\rho_v(t)\}_{v \in V}$} that satisfies  (\ref{eq:budgetconstr}). The continuous time nature of the problem allows us to restrict out attention to policies that focus on one node at each time, without loss of generality.

We define the {\it  time to extinction} as the {time until the process reaches the absorbing state where all nodes are healthy:} 
\begin{equation}\label{eq:extinctiontime}
\tau = {\min}\{t\geq0 : {I_t=\emptyset}\}. 
\end{equation}
{The {\it expected  time to extinction} (the expected value of $\tau$)  is the performance measure that we will be focusing on.}

\subsection{{A perspective on our main result}}
If the graph $G$ is complete,  all policies that always allocate the entire curing budget {to infected nodes} are essentially equivalent, {in the sense 
that the dynamics of $|I_t|$, the number of infected nodes, are identical under all such policies.}  Furthermore,  {$I_t$} evolves as a birth-death Markov chain which is easy to analyze, and it is not hard to show that the expected time to extinction increases exponentially with $n$. On the other hand, for more general graphs with large CutWidth but bounded degree, an analysis using a one-dimensional birth-death chain or a simple Lyapunov function does not seem possible.

{A related, and conceptually simple, way of deriving lower bounds for more general graphs is to try to show} that the process must make consistent progress through configurations {(subsets $I$ of $V$)} where the total curing rate is significantly lower than the {total} infection rate. Such progress {must then be} a low-probability event, {implying} an exponential lower bound on the time to extinction. Unfortunately, {it is not clear whether} this line of argument, based only on the instantaneous infection rates (namely, the ``cuts'' that are encountered --- see Section \ref{sec:combinatorics} for precise definitions) {is possible for general graphs. Indeed, the lower bounds in \cite{DOT14}, for more general regimes, rely on a much more sophisticated argument.}

{The main technical contribution in this paper is to show that in the regime examined (large CutWidth), the above outlined simple approach to deriving lower bounds is successful. Based on some nontrivial combinatorial properties of the CutWidth, and the related concept of the \emph{resilience} of a set of nodes, we will show that there is a sizeable part of the configuration space   in which $|I_t|$ has a strong upward drift. }

\section{Graph theoretic preliminaries}\label{sec:combinatorics}

In this section we consider a deterministic version of the problem {in which we start with all nodes infected, and then cure them one at a time, deterministically, in a way that minimizes }
the maximum cut encountered in the process. This problem has been studied in various forms in the literature \cite{LaPaugh93,Pars78} and its optimal value is {called}  the \emph{CutWidth} of a graph. 
{Our analysis will be based on a related quantity, the \emph{resilience} of a set of nodes, defined in a similar fashion as the CutWidth, except that we start with a  given arbitrary set of nodes, and we are also allowed to} 
deterministically infect nodes in the process. We study {the} properties of the resilience of a {set} of nodes and relate it to the size of the subset (Section \ref{sec:resil}). {Then, in Section \ref{s:properties}, we relate the resilience and the cut associated to a set of nodes,  which is the key  combinatorial result behind our analysis.}

\subsection{CutWidth and resilience}

We first introduce {some terminology: ``\emph{bags}'' and ``\emph{crusades}.''}

\begin{definition} A \textbf{bag}  is a subset $A\subseteq V$ of the set of nodes $V$. We denote by $|A|$ the number of nodes {in} $A$. \end{definition}

{We} introduce two common {operations} on a bag $A$ and we write 
\[
A+v=A\cup\{v\}, \qquad\text{ for any  } v \notin  A,
\]
and 
\[
A-v=A \setminus \{v\}, \qquad\text{ for any } v \in A.
\] 
We also use the notations
$
A \setminus B= \{v \in A: v \notin B\}
$
{$A  {\blacktriangle} B=(A\setminus B) \cup (B\setminus A)$.}

We next define the concept of a crusade. A crusade from $A$ to $B$ is a sequence of bags that starts from $A$ and ends at $B$ with the restriction that at every step of this sequence arbitrarily many nodes may be added to the previous bag, but at most one can be removed. 

\begin{definition}
For any two bags $A$ and $B$, an  \textbf{$(A-B)$-crusade} $\omega$  is a sequence $(\omega_0,\omega_1,\ldots,\omega_{k})$ of bags, of length $|\omega|=k+1$, with the following properties:
\begin{enumerate}[(i)]
\item $\omega_0 = A$,
\item $\omega_{{k}}=B$, and
\item $|\omega_i \setminus \omega_{i+1} |\leq 1,$ for  $i \in \{0, \ldots, k-1\}.$
\end{enumerate}
\end{definition}
Property (iii) states that at every step of a crusade arbitrarily many nodes can be added to the current bag but \emph{at most one} node may be removed from the bag.
Note that the definition of a crusade allows \emph{non-monotone} moves, since a bag at any step can be a subset, a superset or  not comparable to the preceding bag. 
We denote by $\Omega(A-B)$ the set of all $(A-B)$-crusades. 

We also consider a special case of crusades, the \textbf{monotone  crusades} for which only removal of nodes is allowed at each step. Specifically,  for any two bags $A$ and $B$, $A,B\subseteq V$,  an \textbf{($A \downarrow B $)-monotone crusade} $\omega$   is an ($A-B$)-crusade $(\omega_0,\omega_1,\ldots,\omega_{k})$ with the additional property: $\omega_i \supseteq \omega_{i+1}$, for $i \in \{0, \ldots, k-1\}$.
We denote by $\Omega(A \downarrow B)$ the set of all $(A \downarrow B)$-crusades. 

The number of edges connecting {a} bag {$A$} with its complement is called the cut of the bag. Its importance {comes from the fact that it is equal to the total rate at which new infections occur, when the set of currently infected nodes is~$A$.}
\begin{definition} {For any bag $A$, its \textbf{cut}, $\cut(A)$, is defined as the cardinality of the set of edges
$$\big\{(u,v): u\in A,\ v\in A^c\big\}.$$}
\end{definition}

\noindent 
{In Proposition \ref{prop:cutproperties} below, we record, without proof, four  elementary properties of cuts.}

\begin{proposition}
\label{prop:cutproperties}
For any two bags $A$ and $B$, we have 
\begin{itemize}
\item[(i)]$\cut(A\cup B)\leq \cut(A)+\cut(B)\leq \cut(A)+\Delta\cdot |B|.$
\item [(ii)] If $A\subseteq B$, then  for   any $v \in A $, 
\[
\cut(A-v)  - \cut(A) \leq \cut(B-v) - \cut(B).
\]
\item[(iii)]	 $\cut(A)\leq \min\{|A|\cdot \Delta, (n-|A|)\cdot \Delta\}$
\item[(iv)] {$|\cut(A) -\cut(B) | \leq \Delta |A  {\blacktriangle} B|.$ }
\end{itemize}
\end{proposition}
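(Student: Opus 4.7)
The plan is to establish the four items separately, each with an elementary edge-counting argument. Throughout, I will repeatedly use the fact that any single node is incident to at most $\Delta$ edges, together with the observation that an edge $(u,v)$ belongs to $\cut(S)$ if and only if exactly one of $u,v$ lies in $S$.

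For (i), I would prove the first inequality by showing that every edge counted by $\cut(A\cup B)$ is counted by at least one of $\cut(A)$ and $\cut(B)$. Indeed, if $(u,v)$ is cut by $A\cup B$, one endpoint lies in $A\cup B$ and the other in $(A\cup B)^c=A^c\cap B^c$; depending on which of $A$ or $B$ contains the endpoint in $A\cup B$, the edge appears in $\cut(A)$ or $\cut(B)$. The second inequality in (i) follows from $\cut(B)\leq \Delta\cdot |B|$, since each of the $|B|$ nodes contributes at most $\Delta$ incident edges to $\cut(B)$.

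For (ii), which is the standard submodularity property of cuts, I would fix $v\in A\subseteq B$ and compute the change in cut explicitly. Let $d_A(v)$ denote the number of neighbors of $v$ lying in $A-v$, and let $\bar d_A(v)$ denote the number of neighbors of $v$ lying in $A^c$. Then removing $v$ from $A$ swaps the status of the edges incident to $v$: each of the $d_A(v)$ internal edges becomes a cut edge, and each of the $\bar d_A(v)$ cut edges becomes internal. Hence $\cut(A-v)-\cut(A)=d_A(v)-\bar d_A(v)$, and similarly $\cut(B-v)-\cut(B)=d_B(v)-\bar d_B(v)$. Since $A\subseteq B$, we have $d_A(v)\leq d_B(v)$ and $\bar d_A(v)\geq \bar d_B(v)$, and the inequality follows.

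For (iii), I would note that each edge of $\cut(A)$ has one endpoint in $A$ and the other in $A^c$, so summing incidences over nodes in $A$ gives $\cut(A)\leq |A|\cdot \Delta$, while summing over $A^c$ gives $\cut(A)\leq (n-|A|)\cdot\Delta$. For (iv), I would go from $A$ to $B$ by adding or removing the $|A\blacktriangle B|$ nodes one at a time, producing a sequence $A=S_0,S_1,\ldots,S_k=B$ with $k=|A\blacktriangle B|$ and $|S_i\blacktriangle S_{i+1}|=1$. Each single-node step changes the cut by at most $\Delta$, because only the edges incident to the affected node can change status, and there are at most $\Delta$ such edges. A telescoping application of the triangle inequality then yields $|\cut(A)-\cut(B)|\leq \Delta\cdot k=\Delta\cdot|A\blacktriangle B|$. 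No step here is a serious obstacle; the only mild subtlety is being careful with the bookkeeping in (ii), where one must track both the gain and loss of edges across the cut separately.
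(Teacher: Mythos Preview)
Your proposal is correct in all four parts; the edge-counting arguments you give are the standard elementary ones, and the bookkeeping in (ii) is handled cleanly. Note that the paper itself records Proposition~\ref{prop:cutproperties} \emph{without proof}, explicitly calling these ``elementary properties of cuts,'' so there is nothing to compare against---your write-up simply fills in what the authors left to the reader.
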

Note that Proposition \ref{prop:cutproperties}(ii) {states the well-known}  submodularity property of the function $\cut(\cdot)$.

We define the width of a crusade $\omega$ as the maximum cut encountered during the crusade. Intuitively, this is the largest infection rate to be encountered if the nodes were to be cured deterministically according to the sequence prescribed by the crusade (assuming no {new} infections {happen}  in between)  .

\begin{definition}
The \textbf{width} $z(\omega)$ of  an ($A-B$)-crusade $\omega=(\omega_0, \ldots, \omega_{{k}})$ is defined by
\[
z(\omega)=\max_{ {i\in\{1,\ldots,k\}}}\{\cut(\omega_i)\}.
\]
\end{definition}
Note that in the definition above, the maximization starts {after} the first step of the crusade, i.e., we exclude the first bag $\omega_0$ from the maximization. 

We now formally define the CutWidth of a graph $G$ as the minimum over all monotone crusades from $V$ to the empty set  of the corresponding crusade width. 
\begin{definition} For any given graph $G$, its \textbf{CutWidth} $W$ is given by 
\begin{equation}
W=\min_{\omega \in \Omega(V \downarrow \emptyset)} z(\omega). \label{eq:cwdef}
\end{equation}
\end{definition}
 Intuitively, this metric indicates the  maximum cut that is encountered after the first step during {an ``optimal''} monotone crusade {that}  clears the graph. 
 
{The largest possible value of a cut, for graphs with maximum degree $\Delta$,  is $n \Delta /2$, and therefore the CutWidth is also upper bounded by $n \Delta /2$.}
For notational convenience, {we define} 
\begin{equation}
 E=\frac{2}{\Delta}\left( \frac{(n+2) \Delta}{2}-W  \right),
 \label{eq:e}
\end{equation}
{and observe that $E\geq {2}$. 
 {Note that ``small'' values of $E$ indicate that the CutWidth is not too far from the largest possible value, $n\Delta/2$.}
 In Section \ref{sec:combinatorics} we relate {$E$ to cuts} and show that when $E$ is small, then bags with {large} resilience, {as defined below,} also have a {large} cut. 

\begin{definition}  The \textbf{resilience}, {$\gamma(A)$, of a bag $A$ is defined by}    
\begin{equation}
\gamma(A)= \min_{\omega \in \Omega(A-\emptyset)} z(\omega).\label{eq:resiliencedef}
\end{equation}
{We} denote by $\Omega^A \subseteq \Omega(A-\emptyset)$ the set of  crusades $\omega$ that {attain} the minimum, i.e., 
\[
z(\omega)=\gamma(A).
\]
Crusades in $\Omega^A$ are referred to as $A$-optimal, {or simply as optimal, when the set $A$ is clear from the context.}
\end{definition}

Intuitively, $\gamma(A)$ indicates the  maximum cut that is encountered, after the first step, during a crusade {that}  clears a bag $A$. 
Note that in contrast  to the definition of the CutWidth of a graph,  the minimization, in the definition of  resilience, is over  all crusades, not just monotone crusades.

It can be seen that the resilience  of a bag $A$ satisfies the Bellman equation: 
\begin{equation} \label{eq:bellman}
\gamma(A) = \min_{|A\setminus B|\leq 1} \left\{ \max\{\cut(B), \gamma(B)\}\right\}.
\end{equation}

\subsection{Properties of the resilience}\label{sec:resil}
This section explores {some} properties of the resilience.  {We first prove (Lemma \ref{lem:infectiondelta}(i))} that if $A$ and $B$ are two {bags with}  $A \subseteq B$, then $\gamma(A) \leq \gamma(B)$. Intuitively, one can construct a crusade from $A$ to $\emptyset$ as follows:  The crusade starts from  $A$, then continues to the first bag encountered by a $B$-optimal crusade $\omega^B$,  and then follows $\omega^B$. The constructed crusade and $\omega^B$ are the same except for the initial bag of the crusade. By the definition of the resilience, the initial bag does not affect the maximization  and thus the {width of the} new crusade  is equal to $\gamma(B)$.  

We also prove {(Lemma \ref{lem:infectiondelta} {(ii)})} that if two bags differ by only one node $v$, then the corresponding resiliences are at most $\Delta$ apart. Intuitively, one can attach node $v$  to the optimal crusade {for} the {smaller} of the two bags and achieve a maximum cut which is at most $\Delta$ different from the original. 

\begin{lemma}\label{lem:infectiondelta}
{Let $A$ and $B$ be two bags.}
\begin{enumerate}[(i)]
\item \textbf{[Monotonicity]} {If} $A \subseteq B$, then  $\gamma(A) \leq \gamma(B)$.
\item \textbf{[{Smoothness}]} {If} $B=A+v$, then $\gamma(B) \leq \gamma(A)+\Delta$.
\end{enumerate}
\end{lemma}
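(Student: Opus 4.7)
Both parts are established by the same mechanism: start from an optimal crusade for one of the two bags, surgically modify it into a feasible crusade for the other bag, and bound the resulting width. The definition of width---which ignores the initial bag---is what makes such surgery cheap at the start, and Proposition~\ref{prop:cutproperties}(iv) is what controls the cost of inserting an extra node elsewhere.

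For part (i), fix any $\omega = (\omega_0 = B, \omega_1, \ldots, \omega_k = \emptyset) \in \Omega^B$ and define
\[
\omega' = (A,\ \omega_1,\ \ldots,\ \omega_k),
\]
i.e., overwrite only the initial bag. The single non-trivial crusade axiom to verify is $|A\setminus \omega_1|\leq 1$, which follows from $A\subseteq B$ combined with $|B\setminus \omega_1|\leq 1$. Since the maximum in the definition of width starts at index $1$ and therefore skips the initial bag, I get $z(\omega') = z(\omega) = \gamma(B)$, which yields $\gamma(A)\leq \gamma(B)$.

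For part (ii), with $B = A+v$, I would take $\omega = (A, \omega_1, \ldots, \omega_k = \emptyset) \in \Omega^A$ and carry $v$ along the crusade, peeling it off at the very end:
\[
\omega' = (B,\ \omega_1\cup\{v\},\ \omega_2\cup\{v\},\ \ldots,\ \omega_{k-1}\cup\{v\},\ \{v\},\ \emptyset).
\]
Validity splits into three checks: the first transition reduces to $|A\setminus \omega_1|\leq 1$; each interior transition has $(\omega_i\cup\{v\})\setminus(\omega_{i+1}\cup\{v\}) = \omega_i\setminus(\omega_{i+1}\cup\{v\})\subseteq \omega_i\setminus\omega_{i+1}$, so at most one node is removed; and the final step removes exactly $v$. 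To bound the width, Proposition~\ref{prop:cutproperties}(iv) gives $\cut(\omega_i\cup\{v\})\leq \cut(\omega_i)+\Delta\leq \gamma(A)+\Delta$ for $i=1,\ldots,k-1$, while separately $\cut(\{v\})\leq \Delta$ and $\cut(\emptyset)=0$. Hence $z(\omega')\leq \gamma(A)+\Delta$, and the claim follows.

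I do not anticipate a significant obstacle. The only subtle point is in (ii): since the crusade rules allow at most one removal per step, I cannot ``drop'' the extra $v$ silently while also removing another node, which is precisely why $\{v\}$ appears as a separate penultimate bag and $v$ is removed in one dedicated final step. The $\Delta$ slack in the width bound is then supplied cleanly by the cut-smoothness estimate of Proposition~\ref{prop:cutproperties}(iv) applied to the single-node augmentation.
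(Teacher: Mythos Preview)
Your proof is correct and follows essentially the same approach as the paper: in (i) you replace only the initial bag of a $B$-optimal crusade, and in (ii) you append $v$ to every bag of an $A$-optimal crusade and remove it in one extra final step, bounding each modified cut by $\Delta$. The only cosmetic difference is that the paper invokes Proposition~\ref{prop:cutproperties}(i) rather than (iv) for the cut bound, but either suffices.
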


\begin{proof}

\begin{enumerate}[(i)]
\item  Let $\omega^B=(\omega^B_0, \ldots, \omega^B_k)\in \Omega^B$. Consider the sequence $\hat{\omega}=(\homega_0, \ldots, \homega_k)$ of bags  for  which $\hat{\omega}_0 = A$, and $\hat{\omega}_i =\omega^B_i$, for $i = 1, \ldots, k$.  We claim that $\hat{\omega}$ is a crusade $\hat{\omega}\in \Omega(A - \emptyset)$. Indeed,
 \begin{enumerate}
 \item $\hat{\omega}_0=A$;
 \item $\hat{\omega}_k=\omega^B_k=\emptyset$
 \item $|\hat{\omega}_0\setminus \hat{\omega}_1|=|A\setminus\hat{\omega}_1|\leq|B \setminus{\omega^B_1}|=|{\omega^B_0} \setminus{\omega^B_1}|\leq 1$, where the first inequality follows from  $A \subseteq B$ and $\hat{\omega}_1= {\omega^B_1}$. Moreover, for  $i = 1, \ldots, k$, we have  $|\hat{\omega}_i
 \setminus\hat{\omega}_{i+1}|=|{\omega^B_i}
  \setminus 
 {\omega^B_{i+1}}|\leq 1$.
\end{enumerate}
Clearly,
\[
z(\hat{\omega})=\max_{i \in \{1, \ldots, k\}}    \{\cut(\hat{\omega}_i )\} =\max_{i \in \{1, \ldots, k\}}\{\cut(\omega^B_i )\} = \gamma(B).
\]
\noindent Concluding, by {the} definition {of $\gamma(A)$,}
\[
\gamma(A)= \min_{\omega \in \Omega(A- \emptyset)} z(\omega) \leq z(\homega)= \gamma(B).
\]

\item    Let $\omega^A=(\omega^A_0, \ldots, \omega^A_k)\in \Omega^A$.  Consider the sequence $\hat{\omega} = (\homega_0, \ldots, \homega_{k+1})$ of bags  {with}  $\hat{\omega}_i ={\omega^A_i} \cup \{v\}$, for  $i= 0, \ldots, {k }$, and $\hat{\omega}_{k+1}=\emptyset$. Clearly, $\hat{\omega}$ is a crusade $\hat{\omega} \in \Omega(B - \emptyset)$ for which 
 \[
\gamma(B) \leq z(\homega) \leq \max_{i \in \{1, \ldots, {k }\}}    \{\cut(\omega^A_i )\} + \Delta= \gamma(A) + \Delta,
 \] 
since the addition of one node can  change the cut at each stage of the crusade by at most $\Delta$ (Proposition \ref{prop:cutproperties}(i)).
   \end{enumerate}
\end{proof}

So far the notion of the resilience of  a bag has not been related to the infection process under consideration. Recall that for the infection process defined in Section \ref{model}, the number of infected nodes tends to increase at a rate equal to the   cut of  $I_t$. Therefore, in order to study the evolution of the set of infected nodes $I_t$ through  $\gamma(I_t)$, we relate $\gamma(I_t)$ to $\cut(I_t)$. To this end, we  define a special class of bags, called \emph{improvement} bags. 

\begin{definition}
Let $\mathcal{C}=\{A\subseteq V : \text{ there exists } v \in A \text{ for which } \gamma(A-v)<\gamma(A)\}$. Any bag {in} $\mathcal{C}$ {is called} an {\textbf{improvement}} bag.
\end{definition}

Improvement bags have the important property that their cut can be approximately lower bounded by their resilience. Consequently, whenever an improvement bag with high resilience is encountered, the infection rate is  also high. This observation will play a central role in subsequent sections.


 \begin{lemma} \label{lem:cutwhendrops}
For any  improvement bag $A \in \mathcal{C}$, 
\[
\cut(A) \geq \gamma(A)  -\Delta.
\]
 \end{lemma}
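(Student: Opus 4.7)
My plan is to exploit the definition of an improvement bag together with a crusade surgery argument very similar in spirit to the one used in the proof of Lemma \ref{lem:infectiondelta}(i). The goal is to show that if $A$ is an improvement bag, then $\cut(A-v)$ (for the right choice of $v$) cannot be much smaller than $\gamma(A)$, and then invoke Proposition \ref{prop:cutproperties}(iv) to transfer this to $\cut(A)$.

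Concretely, since $A \in \mathcal{C}$, fix $v \in A$ with $\gamma(A-v) < \gamma(A)$, and let $\omega^{A-v} = (\omega^{A-v}_0, \ldots, \omega^{A-v}_k) \in \Omega^{A-v}$ be an optimal crusade from $A-v$ to $\emptyset$. Prepending $A$ to this sequence, I form $\hat{\omega} = (A, A-v, \omega^{A-v}_1, \ldots, \omega^{A-v}_k)$. Because $|A \setminus (A-v)| = 1$, this is a valid $(A-\emptyset)$-crusade. Its width, which excludes the initial bag $A$, is
\[
z(\hat{\omega}) = \max\bigl\{\cut(A-v),\; \max_{i \geq 1} \cut(\omega^{A-v}_i)\bigr\} = \max\bigl\{\cut(A-v),\, \gamma(A-v)\bigr\}.
\]

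Since $\gamma(A) \leq z(\hat{\omega})$ by definition of $\gamma(A)$ as a minimum width, and since we have assumed $\gamma(A-v) < \gamma(A)$, the maximum above can only reach $\gamma(A)$ through the first term, forcing $\cut(A-v) \geq \gamma(A)$. Finally, Proposition \ref{prop:cutproperties}(iv) gives $|\cut(A) - \cut(A-v)| \leq \Delta$, and therefore $\cut(A) \geq \cut(A-v) - \Delta \geq \gamma(A) - \Delta$, as claimed.

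I do not anticipate any real obstacle here: the only subtle point is making sure to use the convention that the width of a crusade ignores the initial bag, which is exactly what allows $\cut(A)$ itself to play no role in the constructed width and thereby forces $\cut(A-v)$ to absorb the lower bound $\gamma(A)$.
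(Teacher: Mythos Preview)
Your proof is correct and follows essentially the same route as the paper: the paper invokes the Bellman equation \eqref{eq:bellman} to obtain $\gamma(A)\leq\max\{\cut(A-v),\gamma(A-v)\}$, whereas you re-derive that inequality inline by explicitly prepending $A$ to an optimal $(A-v)$-crusade, and both then finish by bounding $|\cut(A)-\cut(A-v)|\leq\Delta$. The only cosmetic difference is that you cite Proposition~\ref{prop:cutproperties}(iv) for this last step while the paper cites part~(i); your citation is in fact the cleaner one.
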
 
\begin{proof}
Let $v$ be such that $\gamma(A-v)<\gamma(A)$ and let $B=A-v$.
{Since $|A\setminus B|=1,$  Eq.~\eqref{eq:bellman} yields}
\begin{equation}
\gamma(A) \leq \max \{c(B), \gamma(B)\}. \label{eq:drop}
\end{equation}

{Having assumed that} $\gamma(B)<\gamma(A)$,  Eq.~(\ref{eq:drop}) implies that $\gamma(B)<c(B)$. Hence, $\gamma(A) \leq c(B) \leq c(A)+\Delta$, where the last inequality follows from Proposition \ref{prop:cutproperties}(i). 
\end{proof}

The next step is to characterize the 
{initial resilience} 
$\gamma(I_0)$. The following, {highly nontrivial} theorem from   \cite{BiSe91} {provides an answer when} $I_0=V$. 

\begin{theorem}\label{thrm:monotonepath}
For any graph,  $\gamma(V) = W$
\end{theorem}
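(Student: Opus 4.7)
My plan is to establish the two inequalities $\gamma(V)\leq W$ and $\gamma(V)\geq W$ separately. The first is essentially free from the definitions: every monotone crusade is a crusade, hence $\Omega(V\downarrow\emptyset)\subseteq\Omega(V-\emptyset)$, and the minimum of $z$ over the larger family is at most the minimum over the smaller one.

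For the reverse inequality, I would take an arbitrary (not necessarily monotone) crusade $\omega=(\omega_0,\ldots,\omega_k)\in\Omega(V-\emptyset)$ and attempt to convert it into a monotone crusade of no larger width. The natural candidate is the sequence of ``future unions''
\[
T_j=\bigcup_{j'\geq j}\omega_{j'},\qquad j=0,1,\ldots,k.
\]
First I would verify that $(T_0,\ldots,T_k)$ is a monotone crusade in $\Omega(V\downarrow\emptyset)$: we have $T_0=V$ (since $\omega_0=V$), $T_k=\omega_k=\emptyset$, and $T_{j+1}\subseteq T_j$ by construction. Moreover, any node $v\in T_j\setminus T_{j+1}$ must appear in $\omega_j$ but in no bag $\omega_{j'}$ with $j'\geq j+1$, so $v\in\omega_j\setminus\omega_{j+1}$; hence $|T_j\setminus T_{j+1}|\leq|\omega_j\setminus\omega_{j+1}|\leq 1$. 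Collapsing consecutive duplicates gives a bona fide element of $\Omega(V\downarrow\emptyset)$.

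The main obstacle is then to show $\cut(T_j)\leq z(\omega)$ for all $j\geq 1$, which would immediately yield $W\leq z((T_0,\ldots,T_k))\leq z(\omega)$, and, by taking $\omega$ to be $V$-optimal, $W\leq\gamma(V)$. A direct appeal to the submodularity recorded in Proposition~\ref{prop:cutproperties}(ii) is unfortunately insufficient: from the decomposition $T_j=T_{j+1}\cup\omega_j$ it yields only $\cut(T_j)\leq\cut(T_{j+1})+\cut(\omega_j)-\cut(T_{j+1}\cap\omega_j)$, which can exceed $z(\omega)$ in general. This is where the substantive content of the theorem lies, and I would invoke the combinatorial argument of Bienstock and Seymour~\cite{BiSe91}, which establishes the desired monotonicity via a local surgery on the crusade: each re-introduction of a previously removed node can be eliminated by rerouting the subsequent portion of the sequence, at the cost of reordering but never increasing the maximum cut encountered. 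Iterating this surgery produces a monotone crusade of width at most $z(\omega)$, which finishes the argument.
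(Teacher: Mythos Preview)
The paper does not prove Theorem~\ref{thrm:monotonepath} at all: it is introduced as a ``highly nontrivial theorem from~\cite{BiSe91}'' and simply cited. Your proposal is therefore at least as complete as the paper's own treatment---you correctly dispatch the easy inequality $\gamma(V)\leq W$, set up the future-unions monotonization $(T_j)$, verify it lies in $\Omega(V\downarrow\emptyset)$, honestly identify the genuine obstacle (bounding $\cut(T_j)$ by $z(\omega)$), and then defer to the same reference~\cite{BiSe91} for the substantive step. One small remark: the future-unions construction you start with and the ``local surgery'' you invoke at the end are somewhat different proof strategies in the Bienstock--Seymour framework, so if you were to flesh this out you would want to commit to one; but since both you and the paper ultimately outsource the result to~\cite{BiSe91}, this is immaterial here.
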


Theorem \ref{thrm:monotonepath} {implies that finding a} minimum-width $(V-\emptyset)$-crusade is equivalent to finding {a} minimum-width $(V \downarrow \emptyset)$-crusade.  {This property is not true in general for a general initial set $I_0$; however, a related poperty will be established in Lemma 
\ref{lem:monotonepaths}.}

Next, we explore the connection between the size of a bag and its resilience. {We first}  obtain a  bound on $\gamma(A)$ by considering a crusade which removes all nodes of $A$, {one at a time,}  in an arbitrary order. {We then} obtain a related  bound on $W$ by  constructing a crusade in $\Omega(V-\emptyset)$ {that removes} the nodes of the complement of $A$, {one at a time}, and then {uses} an $A$-optimal crusade $\omega^A$.  These two observations imply {certain constraints (an ``admissible region'')} for the pair $(\gamma(A), |A|)$ on the two dimensional  plane, {which are illustrated} in Figure \ref{fig:region1}.   {Finally,} using the properties of the function $\gamma(\cdot)$ that  have been established so far, we   obtain a refinement of the admissible region, {which is again illustrated}  in Figure \ref{fig:region1}.  

\begin{figure}
\centering
\begin{center}
\includegraphics[width=0.4\textwidth]{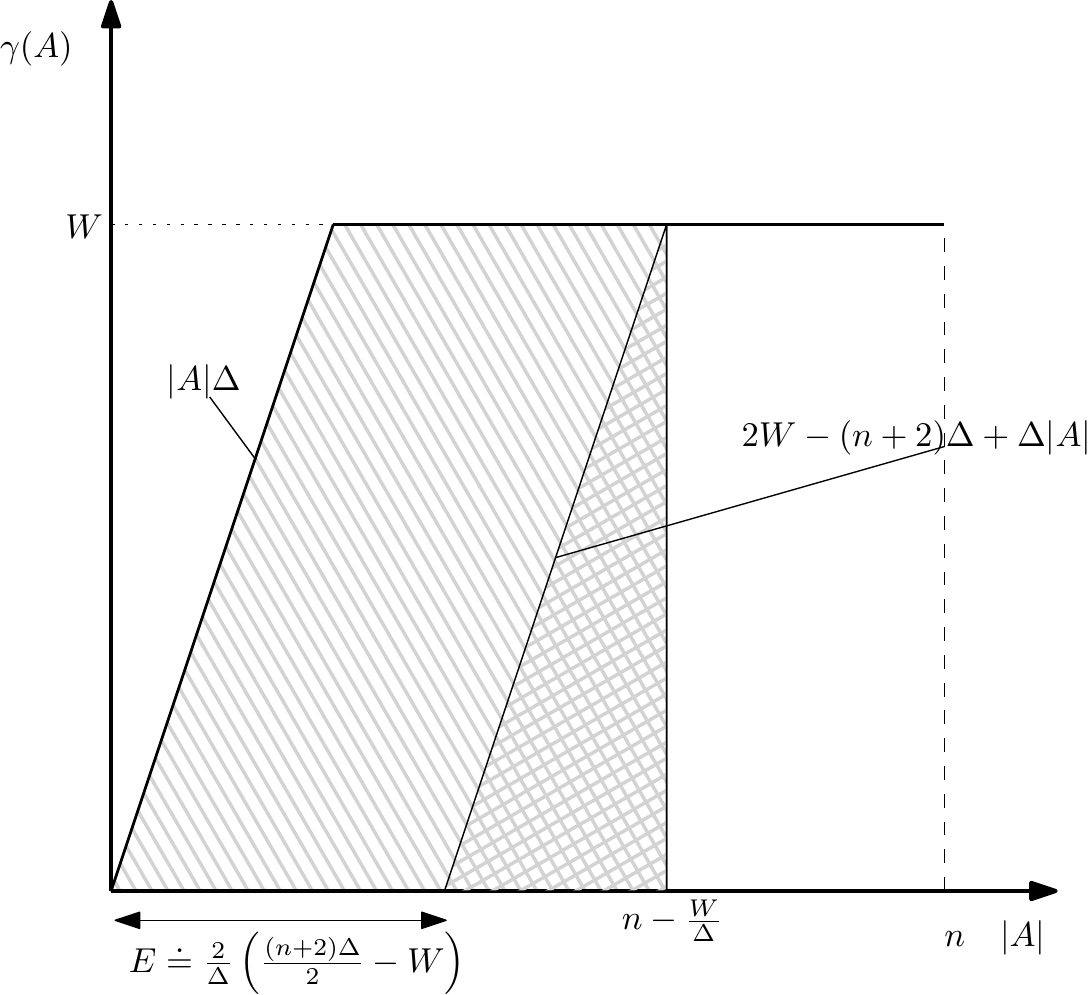}
\caption{{Admissible region for the pair $(\gamma(A),|A|)$. 
{If $\gamma(A)<W$, Lemma \ref{lem:deltasize} implies that $(\gamma(A),|A|)$ belongs to the parallelogram shown in the figure. 
On the other hand, there is no restriction on the size $|A|$ of bags
with $\gamma(A)=W$, and so the admissible region also includes the horizontal} line segment at the top of the figure.  }}\label{fig:region1}
\end{center}
\end{figure}

\begin{lemma}\label{lem:deltasize}
Consider a graph with $W \geq \Delta $ and a bag $A$.   {Let $E$ be as defined in Eq.~\eqref{eq:e}.}
\begin{enumerate}[(i)]
\item $\gamma(A) \leq |A| \Delta$;
\item {If $\gamma(A)<W$, then} $W\leq (n-|A|) \Delta;$
\item {If $\gamma(A)<W$, then}
$\gamma(A) \geq  {\Delta}(|A|-E)$.
\end{enumerate}
\end{lemma}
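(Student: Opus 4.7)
My plan is to prove the three parts in order, since (iii) will use (ii).

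For part (i) I would exhibit the obvious monotone crusade that removes the nodes of $A$ one at a time in any order. Every bag after the initial one has size at most $|A|-1$, hence cut at most $|A|\Delta$ by Proposition~\ref{prop:cutproperties}(iii). So the width of this crusade, and therefore $\gamma(A)$, is at most $|A|\Delta$.

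For part (ii), the natural construction is to prepend to an $A$-optimal crusade $\omega^A\in\Omega^A$ a monotone deletion of the nodes of $A^c$ one at a time, producing a $(V-\emptyset)$-crusade. After $i\leq n-|A|$ prepended deletions the complement of the current bag has size $i$, so Proposition~\ref{prop:cutproperties}(iii) bounds its cut by $(n-|A|)\Delta$; the bags coming from $\omega^A$ have cut at most $\gamma(A)$. Theorem~\ref{thrm:monotonepath} then gives $W=\gamma(V)\leq\max\{(n-|A|)\Delta,\gamma(A)\}$, and the hypothesis $\gamma(A)<W$ forces $W\leq(n-|A|)\Delta$.

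Part (iii) is the main step. The plan is to grow $A$ one node at a time until its resilience reaches $W$ and then bound the number of growth steps via part (ii). Fix any ordering of $A^c$ and form the chain $A=A^0\subset A^1\subset\cdots\subset A^{n-|A|}=V$. Theorem~\ref{thrm:monotonepath} gives $\gamma(A^{n-|A|})=W$, whereas $\gamma(A^0)=\gamma(A)<W$ by hypothesis, so there is a smallest index $t\in\{1,\ldots,n-|A|\}$ with $\gamma(A^t)\geq W$. Since $\gamma(A^{t-1})<W$, part (ii) applied to $A^{t-1}$ yields $W\leq(n-|A^{t-1}|)\Delta=(n-|A|-t+1)\Delta$, equivalently $t\Delta\leq(n-|A|+1)\Delta-W$. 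Iterating the smoothness bound of Lemma~\ref{lem:infectiondelta}(ii) along the chain gives $\gamma(A^t)\leq\gamma(A)+t\Delta$, hence $\gamma(A)\geq W-t\Delta$. Substituting the upper bound on $t\Delta$ yields $\gamma(A)\geq 2W-(n-|A|+1)\Delta\geq 2W-(n-|A|+2)\Delta=\Delta(|A|-E)$, which is the claim.

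The only nontrivial step is the existence of $t$, which is guaranteed by Theorem~\ref{thrm:monotonepath}. No new crusade construction or induction is required; everything reduces to iterating smoothness and invoking part (ii) at the critical index $A^{t-1}$.
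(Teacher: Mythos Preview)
Your proof is correct and follows essentially the same approach as the paper for parts (i) and (ii). For part (iii) the paper instead directly adjoins to $A$ a set $C\subseteq V\setminus A$ of size $n-\lfloor W/\Delta\rfloor-|A|+1$, checks (using $W\geq\Delta$ and part (ii)) that such a $C$ exists, and invokes the contrapositive of (ii) to conclude $\gamma(A\cup C)=W$; your incremental first-passage argument implements the same idea more cleanly, avoiding the floor function and the existence check, and in fact yields $\gamma(A)\geq 2W-(n-|A|+1)\Delta$, which is $\Delta$ stronger than what you need.
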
 

\proof
\vspace{0.5cm}

\begin{enumerate}[(i)]

\item Consider some enumeration $(a_1,a_2,\ldots, a_{|A|})$ of the nodes of $A$. We construct {a crusade $\homega\in\Omega(A-\emptyset)$ by letting}  $\homega_0=A$, and $\omega_i=\omega_{i-1} \setminus \{a_i\}$ for $i=1, \ldots, |A|$.   By Proposition \ref{prop:cutproperties} (i),  the maximum cut encountered by $\homega$ is bounded by $|A| \Delta$. Therefore, 
{
$$\gamma(A)\leq z(\homega)\leq |A|\Delta.$$}

\item Consider some  enumeration $(a^c_1, a^c_2, \ldots, a^c_{n-|A|})$ of {the} nodes {of} $A^c$,  the complement of $A$. Let $\omega^A=(\omega^A_0, \ldots, \omega^A_k)\in \Omega^A$.
We construct {a crusade $\omega\in\Omega(V-\emptyset)$ by letting}  $\omega_0=V$, $\omega_i=\omega_{i-1}\setminus\{a^c_i\}$ for $i=1, \ldots, n-|A|$, and $\omega_i=\omega^A_{i-n+|A|}$ for  $i=n-|A|+1, \ldots, k+n-|A|$. Then,
\begin{align}
W{=\gamma(V)}\leq z(\omega) &=\max\{\gamma(A), \max_{{i\in\{1,\ldots,  n-|A|\}}} {\cut(\omega_i)} \} \label{eq:cas} \\ 
&\leq \max\{\gamma(A), (n-|A|)\Delta\}. \nonumber
\end{align}
{The first equality above follows from  Theorem~\ref{thrm:monotonepath}; }
{the second equality follows from the construction of $\omega$;  the last inequality follows} from Proposition \ref{prop:cutproperties}(i).  {Using the assumption} $\gamma(A)<W$, {Eq.~(\ref{eq:cas}) implies that}
\[
W \leq (n-|A|) \Delta.
\]

\item Consider some bag $A$ for which $\gamma(A)<W$. From (ii), 
\begin{equation}
|A|\leq n-W/\Delta.
\label{eq:Bsize}
\end{equation}
Let $C \subseteq V\setminus A$ be some {nonempty} bag with $|C|= n  - \lfloor W/\Delta \rfloor  -  |A| +1$. 
Note that {Eq.~\eqref{eq:Bsize} implies that}
\[
|C| = n- \lfloor W/ \Delta \rfloor  -|A|+1\geq n-  W/ \Delta   -|A|+1 \geq 1,
\]
and {that the assumption $W \geq \Delta$ implies that}
\[
|C| = n- \lfloor W/ \Delta \rfloor  -|A|+1 \leq n-|A|.
\]
{This shows the existence of  a bag with the desired properties exists.}

We define $F=A\cup C$. Note that  
\begin{align*}
|F|&=|A|+|C|\geq n- \lfloor W/\Delta\rfloor +1 \\
&>  n-\lfloor W/\Delta\rfloor \geq n-W/\Delta,
\end{align*}
 and thus
\[
W>(n-|F|)\Delta.
\]
{Then, part (ii) of the Lemma implies} that $\gamma(F)=W$. 
The resilience of $F$ {satisfies}
\[
W=\gamma(F)=\gamma(A \cup C) \leq \gamma(A) + |C| \Delta,
\]
where the inequality follows from applying Lemma \ref{lem:infectiondelta}(ii) $|C|$ times. {Therefore,} 
\begin{align*}
\gamma(A)&\geq W-|C| \Delta \\
&= W-(n- \lfloor W/\Delta \rfloor  -|A|+1) \Delta\\
		& \geq W-(n- W/\Delta +1  -|A|+1)\Delta \\
		& \geq 2W -(n+2)\Delta +\Delta|A| \\
		& {=\Delta(|A|-E),}
\end{align*}
which concludes the proof. 
\end{enumerate} \endproof

\subsection{Characterization of optimal crusades and {some} implications}
\label{s:properties}

In this section we prove that {when} $E$ is small, i.e., {when the}  CutWidth is close to the largest possible value, then bags with large resilience also have large cuts.

\begin{lemma}\label{lem:expcut}
Suppose that 
$W\geq\Delta$ and that the bag $A$ satisfies $0<\gamma(A)<W$.
Then,
\[
\cut(A) \geq \gamma(A) - 2(E+2) \Delta.
\]
\end{lemma}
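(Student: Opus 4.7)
The plan is to reduce the claim to Lemma \ref{lem:cutwhendrops}, which already handles improvement bags, by exhibiting an improvement bag $B$ that is close to $A$ in symmetric difference and then transferring the cut estimate back to $A$ via the Lipschitz-type bound in Proposition \ref{prop:cutproperties}(iv).

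First, if $A\in \mathcal{C}$, then Lemma \ref{lem:cutwhendrops} directly yields $\cut(A)\geq \gamma(A)-\Delta$, which is already stronger than what is claimed. Otherwise, combining the negation of the improvement-bag definition with monotonicity (Lemma \ref{lem:infectiondelta}(i)) forces $\gamma(A-v)=\gamma(A)$ for every $v\in A$. I would then build a strictly decreasing chain $A=A^0\supsetneq A^1\supsetneq \cdots$ by removing one arbitrarily chosen node at each step, and let $j^*$ be the smallest index at which the resilience strictly drops, i.e., $\gamma(A^{j^*})<\gamma(A^{j^*-1})$. Such a $j^*$ must exist, since $\gamma(\emptyset)=0<\gamma(A)$, and by construction $\gamma(A^j)=\gamma(A)$ for all $j<j^*$ while $A^{j^*-1}$ is itself an improvement bag.

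The heart of the argument, and the step where $E$ enters, is bounding $j^*-1=|A\setminus A^{j^*-1}|$. Every $A^j$ with $j<j^*$ has resilience equal to $\gamma(A)<W$, so on the one hand Lemma \ref{lem:deltasize}(i) forces $|A^{j^*-1}|\geq \gamma(A^{j^*-1})/\Delta=\gamma(A)/\Delta$, and on the other hand Lemma \ref{lem:deltasize}(iii) applied to $A$ gives $|A|\leq \gamma(A)/\Delta+E$. Subtracting these yields $j^*-1\leq E$, and since $A^{j^*-1}\subseteq A$ we also get $|A\,{\blacktriangle}\,A^{j^*-1}|=j^*-1\leq E$.

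The conclusion is then immediate: Lemma \ref{lem:cutwhendrops} applied to the improvement bag $A^{j^*-1}$ gives $\cut(A^{j^*-1})\geq \gamma(A^{j^*-1})-\Delta=\gamma(A)-\Delta$, while Proposition \ref{prop:cutproperties}(iv) gives $|\cut(A)-\cut(A^{j^*-1})|\leq \Delta\cdot |A\,{\blacktriangle}\,A^{j^*-1}|\leq E\Delta$. Chaining these yields $\cut(A)\geq \gamma(A)-(E+1)\Delta$, which comfortably implies the claimed $\cut(A)\geq \gamma(A)-2(E+2)\Delta$. The main obstacle in this plan is the ``rigidity'' observation that a non-improvement bag with resilience strictly less than $W$ can only be shrunk by at most $E$ successive single-node deletions before the resilience must drop; this is what links the global quantity $E$ (and hence the CutWidth) to the local behavior of $\cut(A)$.
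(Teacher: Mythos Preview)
Your argument is correct and in fact yields the sharper bound $\cut(A)\geq \gamma(A)-(E+1)\Delta$, but it is a genuinely different route from the paper's proof.

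The paper proceeds through Lemma~\ref{lem:monotonepaths}: it first builds an \emph{optimal} crusade $\hat\omega\in\Omega^A$ with special structural properties (arbitrary additions allowed only at the first step, monotone thereafter), then locates the first improvement bag $B=\hat\omega_l$ along that crusade, and does a case analysis on $l\in\{0,1,\geq 2\}$. Because the first step of the optimal crusade may add nodes, comparing $A$ to $\hat\omega_1$ costs an extra $(E+2)\Delta$ (Lemma~\ref{lem:monotonepaths}(v)), and comparing $\hat\omega_1$ to $B$ costs another $E\Delta$; the losses accumulate to $2(E+2)\Delta$.

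You sidestep the optimal-crusade machinery entirely. By walking down an \emph{arbitrary} monotone chain $A=A^0\supsetneq A^1\supsetneq\cdots$, you exploit that along a monotone path the resilience is nonincreasing (Lemma~\ref{lem:infectiondelta}(i)), so the first drop automatically certifies an improvement bag $A^{j^*-1}\subseteq A$ with $\gamma(A^{j^*-1})=\gamma(A)$. The size sandwich from Lemma~\ref{lem:deltasize}(i),(iii) then bounds $|A\setminus A^{j^*-1}|\leq E$ directly, with no detour through $\hat\omega_1$, hence the single loss of $(E+1)\Delta$.

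What each approach buys: the paper's route establishes the structural Lemma~\ref{lem:monotonepaths}, which is of independent interest as a characterization of optimal crusades (monotone after one step). Your route is shorter, uses only Lemmas~\ref{lem:infectiondelta}, \ref{lem:cutwhendrops}, \ref{lem:deltasize} and Proposition~\ref{prop:cutproperties}(iv), and tightens the constant; plugging $(E+1)\Delta$ in place of $2(E+2)\Delta$ would mildly improve the exponent in Theorem~\ref{thrm:exponential}.
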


The rest of the  section is devoted to proving this property.  We start with a  characterization of optimal crusades for a given bag $A$. Specifically, {Lemma \ref{lem:monotonepaths} states} that for any bag $A$, there exists an optimal crusade which: {(i)  can add nodes, and potentially remove one node at the first step; (ii) cannot add nodes (i.e., is monotone) after the first step}
  (parts (i)-(ii)). } Moreover, we argue that {except for} trivial cases, an improvement bag must be encountered before the end of the crusade (part {(vi)}). {These properties allow us to make a connection between resilience and cuts.}

\begin{lemma}\label{lem:monotonepaths}
For any  {nonempty} bag $A$ with $\gamma(A)>0$, there exists a crusade $\hat{\omega}
{=(\homega_0,\homega_1,\ldots,\homega_{k})} \in \Omega^A$ with the following properties:
\begin{enumerate}[(i)]
\item For  $i \in \{1, \ldots, k\} $, ${\homega}_i \neq \homega_{i-1}$
\item For  $i \in \{2, \ldots, k\} $, $\hat{\omega}_{i} \subset \hat{\omega}_{i-1}$.
\item For  $i \in \{0, \ldots, k\} $, $\gamma(\homega_i) \leq \gamma(A)$. 
\item $ \gamma(\homega_1) \geq \gamma(A)-\Delta$. 
\item $ c(A) \geq c(\homega_1)- \Delta(E+2)$.
\item
{Let} $l= {\min} \{ i \geq 0 : \homega_i \in \mathcal{C}\}$. 
Then,  $l<\infty$. 
\end{enumerate}
\end{lemma}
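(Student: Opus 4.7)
Plan: I would construct $\hat{\omega}$ by refining an arbitrary element of $\Omega^A$ in stages. Property~(i) is immediate: deleting any consecutive duplicates in a crusade leaves it in $\Omega^A$ and eliminates such repetitions. For~(ii), the central and hardest step, I would pick $\hat{\omega}_1$ to be a minimizer in the Bellman recursion~\eqref{eq:bellman}, which guarantees $|A\setminus\hat{\omega}_1|\le 1$ and $\max\{\cut(\hat{\omega}_1),\gamma(\hat{\omega}_1)\}=\gamma(A)$, and then continue with a strictly decreasing $(\hat{\omega}_1\downarrow\emptyset)$-crusade of width $\gamma(\hat{\omega}_1)$. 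Showing that such a monotone tail exists is the main obstacle, since Theorem~\ref{thrm:monotonepath} asserts the existence of an optimal monotone crusade only when the starting bag is $V$; I would prove the required generalization to arbitrary starting bags either by reducing to Theorem~\ref{thrm:monotonepath} on a modified graph in which the nodes of $V\setminus\hat{\omega}_1$ are treated as frozen exterior vertices, or by directly adapting the exchange argument of~\cite{BiSe91}.

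Once (i) and (ii) are in place, property (iii) follows from the monotonicity of $\gamma$ (Lemma~\ref{lem:infectiondelta}(i)) applied to the descending chain $\hat{\omega}_1\supsetneq\hat{\omega}_2\supsetneq\cdots$, combined with $\gamma(\hat{\omega}_1)\le\gamma(A)$ from the Bellman step. For (iv), I would observe that $A-v\subseteq\hat{\omega}_1$ for some $v\in A$ (or with no removal, in which case $A\subseteq\hat{\omega}_1$); monotonicity then gives $\gamma(\hat{\omega}_1)\ge\gamma(A-v)$, and the smoothness bound $\gamma(A)\le\gamma(A-v)+\Delta$ from Lemma~\ref{lem:infectiondelta}(ii) closes the gap.

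For (v), I would invoke Lemma~\ref{lem:deltasize}: part~(iii) applied to $\hat{\omega}_1$ gives $|\hat{\omega}_1|\le\gamma(\hat{\omega}_1)/\Delta+E\le\gamma(A)/\Delta+E$, while part~(i) applied to $A$ gives $|A|\ge\gamma(A)/\Delta$; hence $|\hat{\omega}_1|-|A|\le E$, and combined with $|A\setminus\hat{\omega}_1|\le 1$ this yields $|A\blacktriangle\hat{\omega}_1|\le E+2$, so that Proposition~\ref{prop:cutproperties}(iv) delivers $\cut(A)\ge\cut(\hat{\omega}_1)-\Delta(E+2)$. The boundary case $\gamma(\hat{\omega}_1)=W$ (which forces $\gamma(A)=W$) I would handle separately by re-selecting $\hat{\omega}_1$ as a subset of $A$, so that $|\hat{\omega}_1\setminus A|=0$. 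Finally for (vi), the sequence $\gamma(\hat{\omega}_1)\ge\gamma(\hat{\omega}_2)\ge\cdots\ge\gamma(\hat{\omega}_k)=0$ must strictly decrease at some step: if this occurs at a monotone step $i\ge 2$, then $\hat{\omega}_i=\hat{\omega}_{i-1}-v$ and $\gamma(\hat{\omega}_i)<\gamma(\hat{\omega}_{i-1})$, placing $\hat{\omega}_{i-1}$ in $\mathcal{C}$; otherwise $\gamma(\hat{\omega}_1)=0<\gamma(A)$, and since $A-v\subseteq\hat{\omega}_1$ we get $\gamma(A-v)\le\gamma(\hat{\omega}_1)<\gamma(A)$, so that $A\in\mathcal{C}$. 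Either way, $l<\infty$.
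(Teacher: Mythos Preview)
Your plan is sound and pinpoints the right technical core (the submodularity-based exchange argument from \cite{BiSe91}), but the architecture differs from the paper's. You build $\hat\omega$ in two stages --- a Bellman step to choose $\hat\omega_1$, then an optimal \emph{monotone} tail from $\hat\omega_1$ --- which requires, as a separate lemma, that every bag admits an optimal crusade monotone from the very first step. The paper instead selects $\hat\omega$ in one shot as a lexicographic minimizer of $P(\omega)=\bigl(\sum_i(\cut(\omega_i)+1),\,\sum_i|\omega_i|\bigr)$ over $\Omega^A$, and derives (i)--(ii) directly from minimality: a repeated bag or a non-monotone step at index $\ge 1$ permits a local modification lowering $P$, via submodularity of $\cut$. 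This \emph{is} the \cite{BiSe91} argument, but carried out inline rather than invoked as a black box, which sidesteps the need to isolate and prove your auxiliary lemma. Of your two proposed routes to that lemma, the second (directly adapting \cite{BiSe91}) is precisely what the paper does; the first (reducing to Theorem~\ref{thrm:monotonepath} on a modified graph with frozen exterior vertices) is not obviously workable, since deleting or contracting $V\setminus\hat\omega_1$ alters the cut values. Your arguments for (iii)--(vi) match the paper's in substance. Your boundary-case worry in (v) is legitimate but is shared by the paper: the appeal to Lemma~\ref{lem:deltasize}(iii) tacitly needs $\gamma(\hat\omega_1)<W$, which is only guaranteed under the hypothesis $\gamma(A)<W$ of Lemma~\ref{lem:expcut}, the one place part~(v) is invoked.
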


\proof{} 
We assign to  every $(A-\emptyset)$-crusade  $\omega \in \Omega^A$ a value $P(\omega)= \left ( \sum_{i=0}^{|\omega|-1} (\cut(\omega_i)+1), \sum_{i=0}^{|\omega|-1} |\omega_i| \right)$.  Let $\homega \in \argmin_{\omega \in \Omega^A} P(\omega)$, where {the} minimum is taken with respect to the lexicographic ordering.  

\begin{enumerate}[(i)]
\item  \noindent We first prove that  for all $i \in  \{0, \ldots, k-1\}$,
 \begin{equation}\label{eq:notequal}
 \homega_i \neq \homega_{i+1}.
 \end{equation}
For the purposes of contradiction, assume that for some $q \in \{1, \ldots, k-1\}$, $\homega_q = \homega_{q+1}$,  and construct a crusade  $\tilde{\omega}=(\tilde{\omega}_0, \ldots, \tilde{\omega}_{k-1})$ by setting $\tilde{\omega}_i=\homega_i$ for all $i \leq q$, and $\tilde{\omega}_{{i}}=\homega_{{i+1}}$ for ${i=q+1,\ldots,k-1}$.

Clearly, $\tilde{\omega}{=(\tilde{\omega}_0,\ldots,\tilde{\omega}_{{k-1}})}$   is a crusade, i.e., $\tilde{\omega} \in \Omega (A -\emptyset)$. Moreover, $\tilde{\omega} \in \Omega^A$, {because} $\max_{1 \leq i \leq k-1} \cut(\tilde{\omega}_i) = z(\homega)=\gamma(A)$. But  $\sum_{i=0}^{{k-1}} (\cut(\tilde{\omega}_i)+1)<\sum_{i=0}^{{k}} (\cut(\homega_i)+1)$,
{which implies that} 
$P(\tilde{\omega})<P(\homega)$, {and} contradicts the minimality of $\homega$.

 \item The  idea of the proof of this property  is borrowed from \cite{BiSe91}, and is based on the submodularity of $\cut(\cdot)$. 
\noindent We first  argue that for all $i \in \{1, \ldots, {k-1}\}$, 

 \begin{equation} 
 \cut(\homega_{i+1}\cup \homega_i) \geq \cut(\homega_i).
 \label{eq:firstep}
 \end{equation} 

For the purposes of contradiction, assume that there exists some $q \in  \{1, \ldots, k-1\}$ such that  
\begin{equation}\label{eq:cutcont}
 \cut(\homega_{q+1}\cup \homega_q) < \cut(\homega_q),
\end{equation}
and construct the sequence of bags  $\tilde{\omega}=(\tilde{\omega}_0, \ldots, \tilde{\omega}_{k})$, by setting $\tilde{\omega}_i=\homega_i$ for all $i \neq q$ and $\tilde{\omega}_q=\homega_{q+1}\cup \homega_q$. 

We first claim that $\tilde{\omega}$ is a crusade, i.e., $\tilde{\omega} \in \Omega(A-\emptyset)$.  Indeed, since $\homega$ is a crusade, we get $|\homega_q \setminus \homega_{q+1}| \leq 1$ and $|\homega_{q-1} \setminus \homega_{q}| \leq 1$. Therefore, 
\[
|\tilde{\omega}_{q-1} \setminus \tilde{\omega}_q|=|\homega_{q-1} \setminus (\homega_{q+1}\cup \homega_q)| \leq |\homega_{q-1}- \homega_q| \leq 1,
\]
where the first equality follows from the construction of $\tilde{\omega}$ and the second inequality from  $\homega_{q+1}\cup \homega_q$. Furthermore,
\begin{align*}
|\tilde{\omega}_{q} &\setminus \tilde{\omega}_{q+1}| = |(\homega_{q+1}\cup \homega_q) \setminus  \homega_{q+1}|\leq|\homega_{q}- \homega_{q+1}|\leq 1,
\end{align*}
where the the first equality follows from the construction of $\tilde{\omega}$ and the second inequality from  $\homega_{q+1}\cup \homega_q \supset \homega_{q+1}$.

Moreover, we claim that $\tilde{\omega} \in \Omega^A$. 
Indeed 
\begin{align*}
\max_{1 \leq i \leq k} & \cut(\tilde{\omega}_i) = \max \{ c( \tilde{\omega}_q), \max_{1 \leq i \leq k, i \neq q} \cut(\homega_i)\}\\
& \leq \max_{1 \leq i\leq k} \cut(\homega_i) =\gamma(A),
\end{align*}
where the  inequality follows from (\ref{eq:cutcont}). 

On the other hand, it follows from (\ref{eq:cutcont}) that $\sum_{i=0}^{{k}} (\cut(\tilde{\omega}_i)+1)<\sum_{i=0}^{{k}} (\cut(\homega_i)+1)$ and thus $P(\tilde{\omega})<P(\homega)$,
which contradicts  the minimality of $\homega$. We {have} therefore established (\ref{eq:firstep}). 

Using the submodularity of the cut as well as Eq.~(\ref{eq:firstep}), we have that  for all $i \in \{1, \ldots, k-1\}$, 
\begin{equation}\label{eq:hq}
\cut(\homega_{i+1} \cap \homega_i) \leq \cut(\homega_{i+1}).
\end{equation}

We now  prove that $|\homega_{i+1} \cap \homega_i|\geq |\homega_{i+1}|$ for all $i\in \{1, \ldots, k-1\}$.  

For the purposes of contradiction, assume that there exists some $q \in \{1, \ldots, k-1\}$ such that 
\begin{equation}
|\homega_{q+1} \cap \homega_q|< |\homega_{q+1}|.
\label{eq:contr2}
\end{equation}

Construct the sequence $\tilde{\omega}_i=\homega_i$ for all $i \neq q+1$ and $\tilde{\omega}_{q+1}=\homega_{q+1}\cap \homega_q$. 

We first claim that $\tilde{\omega}$ is a crusade, i.e. $\tilde{\omega} \in \Omega(A-\emptyset)$.  Indeed, since $\homega$ is a crusade we get $|\homega_q \setminus \homega_{q+1}| \leq 1$ and $|\homega_{q+1} \setminus \homega_{q+2}| \leq 1$. Therefore, 
\[
|\tilde{\omega}_q \setminus \tilde{\omega}_{q+1}|=|\homega_{q} \setminus (\homega_{q+1}\cap \homega_q)| = |\homega_{q}- \homega_{q+1}|\leq 1,
\]
where the the first equality follows from the construction of $\tilde{\omega}$ and the second inequality from  $\homega_{q+1}\cap \homega_q \subset \homega_{q+1}$. Furthermore,
\begin{align*}
|\tilde{\omega}_{q+1} \setminus \tilde{\omega}_{q+2}|&=|(\homega_{q+1}\cap \homega_q) \setminus  \homega_{q+2}| \\
&\leq  |\homega_{q+1}- \homega_{q+2}|\leq 1,
\end{align*}
where the the first equality follows from the construction of $\tilde{\omega}$ and the second inequality from  $\homega_{q+1}\cap \homega_q \subset \homega_{q+1}$. 
Moreover, we claim that $\tilde{\omega} \in \Omega^A$. 
Indeed 
\begin{align*}
\max_{1 \leq i \leq k} & \cut(\tilde{\omega}_i) = \max \{ c( \tilde{\omega}_q), \max_{1 \leq i \leq k, i \neq q+1} \cut(\homega_{q+1})\} \\
&\leq \max_{1 \leq i\leq k} \cut(\homega_i) =\gamma(A),
\end{align*}

where the  inequality follows from (\ref{eq:hq}). 

On the other hand, it follows from (\ref{eq:hq}) that $\sum_{i=0}^{k} (\cut(\tilde{\omega}_i)+1)\leq\sum_{i=0}^{k} (\cut(\homega_i)+1)$ and from (\ref{eq:contr2}) that  $\sum_{i=0}^{k} |\omega_i| < \sum_{i=0}^{k} |\tilde{\omega}_i|$. Therefore, $P(\tilde{\omega})<P(\homega)$,
which contradicts  the minimality of $\homega$.

Therefore we established that   $|\homega_{i+1} \cap \homega_i|\geq |\homega_{i+1}|$ for all $i\in \{i, \ldots, k-1\}$.  The latter implies that {for} all $i\in \{1, \ldots, k-1\}$, $\homega_{i+1} \subseteq \homega_i$.  Using {part} (i) of the lemma, {it follows that} that for $i \in \{1, \ldots, k-1\}$, $\homega_{i+1} \subset \homega_i$.

\item We prove the result by induction. First, observe that $\gamma(\homega_0)=\gamma(A)$. Assume that $\gamma(A) \geq \gamma(\homega_i)$. Moreover, by (\ref{eq:bellman}), for all $i \in \{1, \ldots, k-1\}$,  $\gamma(\homega_i)=\max\{\gamma(\homega_{i+1}), \cut(\homega_{i+1})\} \geq \gamma(\homega_{i+1})$. Therefore, $\gamma(\homega_{i+1}) \leq \gamma(A)$.

\item We consider two cases. Assume that ${\hat\omega}_1 \supset A$. Then $\gamma(\homega_1) \geq \gamma(A){\geq \gamma(A) - \Delta}$. Otherwise, by the definition of a crusade we get $|A\setminus \homega_1| \leq 1$. Therefore, we can write $\homega_1=A \cup D -v$, {for some set $D$ (disjoint from $A$) and some $v\in A$.} {Using} Lemma \ref{lem:infectiondelta}(ii), {and then Lemma \ref{lem:infectiondelta}(i), we obtain}
$$\gamma(\homega_1) \geq \gamma(A \cup D) -\Delta
{\geq \gamma(A)-\Delta}.$$
 
\item From (iii) we obtain $\gamma(\homega_1) \leq \gamma(A)$. Therefore, using Lemma \ref{lem:deltasize}(iii), we conclude that 
\begin{equation}
|\homega_1| \leq \frac{\gamma(A)}{\Delta} + E. \label{eq:h1}
\end{equation}
Moreover, by Lemma \ref{lem:deltasize}(i), we get 
\begin{equation}
|A| \geq \frac{\gamma(A)}{\Delta}.\label{eq:h2}
\end{equation}
We consider two cases. Assume that $\omega_1 \supset A$. Then,
\[
|\homega_1  {\blacktriangle} A| = |\homega_1|-|A|.
\] 
Otherwise, by the definition of a crusade we get $|A\setminus {\homega}_1| \leq 1$. Therefore, we can write   ${\hat\omega}_1=A \cup D -v$, 
{where $D$ is disjoint from $A$ and $v\in A$. Thus,}
\[
|\homega_1  {\blacktriangle} A| = |D|+1 = |\homega_1| - |A|+2. 
\]

Therefore, in both cases,
\[
|\homega_1  {\blacktriangle} A|  \leq |\homega_1| - |A|+2.
\]
We {then} use Eqs.~(\ref{eq:h1}) and (\ref{eq:h2}) to obtain
\[
|\homega_1  {\blacktriangle} A|\leq E+2.
\]
The result follows by applying Proposition \ref{prop:cutproperties}(iv). 

\item Note that $\gamma(\homega_k)=\gamma(\emptyset)=0$.
Note also that any single-element set $B$ satisfies $\gamma(B)=0$.
Suppose that $\gamma(\homega_1)>0$. Then, there exists some $i\in\{1,\ldots,k-1\}$ such that $\gamma(\homega_{i+1})<\gamma(\homega_i)$,
and $\homega_i$ is an improvement bag.

{Suppose now that $\gamma(\homega_1)=0$. If $\homega_1=\emptyset$, then $\gamma(A)=0$, which contradicts the assumption $\gamma(A)>0$. If $\homega_1$ is nonempty, then we must have $\cut(\homega_2)=0$, so that $\homega_2$ is empty and $\homega_1$ is a singleton. 
Since $\gamma(A)>0$, the set $A$ is not a singleton. Since at most one element can be removed in going from $A$ to $\homega_1$, it follows that $A$ consists of two elements and that a single element was removed from $A$. In that case, $A=\homega_0$ is an improvement bag.}

{In both cases, we see that there exists some $i$ for which $\homega_i$ is an improvement bag and therefore $l$ is well-defined and finite.}
\endproof
\end{enumerate}

\proof{(of Lemma \ref{lem:expcut})}
Consider {a} crusade $\homega_A$ with the properties {in} Lemma \ref{lem:monotonepaths},  and {let} {$l\geq 0$ be such that} $B=\homega^A_l$ is the fist improvement bag encountered.


From Lemma \ref{lem:monotonepaths}(vi), {$l$ is well-defined and finite. We} consider three cases:
\begin{enumerate}[(i)]
\item $l=0$: If $A$ is itself an improvement bag, then from Lemma \ref{lem:cutwhendrops}, $\cut(A) \geq \gamma(A)-\Delta$.
\item $l=1$:   In this case, {$\homega_1$ is an improvement bag.} From Lemma \ref{lem:cutwhendrops}, $\cut(\homega_1) \geq \gamma(\homega_1)-\Delta$. Then, from Lemma \ref{lem:monotonepaths}(iv), we obtain 
\[
\cut(\homega_1) \geq \gamma(A)-2\Delta. 
\]

Moreover, from Lemma \ref{lem:monotonepaths}(v), we get
\[
\cut(A)\geq \cut(\homega_1) - (E+2) \Delta \geq \gamma(A) - (E+4) \Delta.
\]

\item $l>2$: In this case, {by property (ii) in} Lemma \ref{lem:monotonepaths}(ii), 
{it folows that}
$B \subset \homega_1$ {and}
 \[
 |B  {\blacktriangle} \homega_1|= |\homega_1|-|B|.
 \]
 Moreover, since $B$ is the first improvement bag that is encountered, $\gamma(B)=\gamma(\homega_1)$.  We use Lemma \ref{lem:deltasize}(i) to obtain 
 \[
 |B| \geq \gamma(B)/\Delta=\gamma(\homega_1)/\Delta,
 \] 
 and Lemma \ref{lem:deltasize}{(iii) } to obtain
\[
 |\homega_1| \leq \gamma(\homega_1)/\Delta + E.
 \] 
 Combining the above, 
 \[
 |B  {\blacktriangle} \homega_1|= |\homega_1|-|B| \leq E,
 \]
 from which we conclude that
 \[
 \cut(\homega_1) \geq \cut(B)- E \Delta \geq \gamma(\homega_1)-(E+1) \Delta.
 \]
 where the first inequality follows from Proposition \ref{prop:cutproperties}(iv) and the second from the fact that $B$ is an improvement bag and $\gamma(B)=\gamma(\omega_1)$. Therefore, from Lemma \ref{lem:monotonepaths}(v), we obtain
 \[
 \cut(A) \geq \gamma(\omega_1) - (2E+3) \Delta.
 \]
 Finally, using Lemma \ref{lem:monotonepaths}(iv), we {conclude that}
 \[
 \cut(A) \geq \gamma(A) - 2(E+2) \Delta,
 \] 
 \end{enumerate}
 {which completes the proof of Lemma \ref{lem:expcut}.
}

{The combinatorial properties of the resilience derived in this section will be used next to obtain
a lower bound on the expected extinction time,}
in the regime where $\gamma(I_0) \gg r$.

\section{Exponential Lower Bound} \label{sec:exponential}
In this section we {state and} prove our main result. 
{Specifically,}
we use Lemma \ref{lem:expcut} to argue that the 
{process must traverse a region in which the}
number {$|I_t|$} of infected nodes has an upward drift, which in turn {leads to} the desired lower bound.  

\begin{theorem}\label{thrm:exponential}
{Suppose that $\gamma(I_0)\geq \Delta (9E+{12}) + 3r$.} 
Then, 
\[
\mathbb{E}_{I_0}[\tau] \geq \frac{1}{2r}\left( \left( \frac{\gamma(I_0)-(9 E+12) \Delta}{3r} \right)^{\frac{\gamma(I_0)}{3\Delta}-1}-1 \right).
\]
\end{theorem}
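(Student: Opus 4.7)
\textit{Proof plan.} The plan is to turn Lemma \ref{lem:expcut} into a drift / gambler's-ruin argument on the embedded jump chain. Since $0<\gamma(A)<W$ implies $\cut(A)\geq \gamma(A)-2(E+2)\Delta$, and since curing occurs at total rate at most $r$, the conditional probability that a jump of the process is a cure (the only type of jump that can reduce $\gamma$ or $|I_t|$, by monotonicity of $\gamma$) is at most
\[
\frac{\rho(I_t)}{\cut(I_t)+\rho(I_t)}\leq \frac{r}{r+\gamma(I_t)-2(E+2)\Delta}.
\]
Under the hypothesis $\gamma(I_0)\geq (9E+12)\Delta+3r$, this per-jump cure probability is at most $1/C$, where $C=(\gamma(I_0)-(9E+12)\Delta)/(3r)\geq 1$. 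Hence, as long as $\gamma(I_t)$ stays comparable to $\gamma(I_0)$, the process has a strong upward bias in $\gamma$, and bringing it down to $0$ is an exponentially rare event.

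The next step is a gambler's-ruin analysis on the jump chain $\{Y_n\}$. By the smoothness part of Lemma \ref{lem:infectiondelta}, each jump changes $\gamma$ by at most $\Delta$, so the coarsened level function $\ell(A):=\lfloor\gamma(A)/(3\Delta)\rfloor$ changes by at most one per jump and can decrease only at a cure event. The process starts with $\ell(I_0)\geq L:=\gamma(I_0)/(3\Delta)-1$, and extinction forces $\ell$ down to $0$. A clean way to package the argument is to introduce the potential
\[
V(A)=\frac{C^{\ell(A)}-1}{2r},
\]
which satisfies $V(\emptyset)=0$ and $V(I_0)\geq \tfrac{1}{2r}(C^{L}-1)$. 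I would then verify that $\mathcal{L}V(A)\geq -1$ throughout the good range by writing the generator as the sum of a (non-positive) cure contribution, bounded below by $-r\cdot (C-1)C^{\ell(A)-1}/(2r)$, and a (non-negative) infection contribution; the ratio $\cut(A)/r$ guaranteed by Lemma \ref{lem:expcut} is exactly what is needed for these to balance, with $C$ chosen so that the infection gains absorb the cure losses. Optional stopping applied to the submartingale $V(I_t)+t$ at $t=\tau$ then yields $\mathbb{E}_{I_0}[\tau]\geq V(I_0)-V(\emptyset)$, which is the desired exponential bound. (An equivalent approach is to bypass the potential and instead combine the standard gambler's-ruin formula for the expected number of cure events along the descent with the inequality $\mathbb{E}[N^c_\tau]\leq r\,\mathbb{E}[\tau]$, which follows from the cure-rate bound $\rho\leq r$.)

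The main obstacle I anticipate is careful bookkeeping of three different sources of $\Theta((E+2)\Delta)$ slack, whose combination produces the constant $(9E+12)\Delta$ in the theorem rather than a weaker bound: (i) the intrinsic $2(E+2)\Delta$ gap between $\cut(A)$ and $\gamma(A)$ in Lemma \ref{lem:expcut}; (ii) additional slack needed to keep the per-jump cure probability uniformly below $1/C$ throughout the range of $\gamma$ visited during the descent; and (iii) the rounding error in passing from $\gamma(A)$ to the integer level $\ell(A)$ in units of $3\Delta$. A secondary subtlety is the regime $\gamma(A)=W$, where Lemma \ref{lem:expcut} does not directly apply; however, by monotonicity (Lemma \ref{lem:infectiondelta}(i)) such bags act as a reflecting barrier at the top of the state space, so excluding them only decreases $\mathbb{E}_{I_0}[\tau]$ and the lower bound is unaffected.
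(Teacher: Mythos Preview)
Your argument has a genuine gap: you build the gambler's-ruin on the level $\ell(A)=\lfloor\gamma(A)/(3\Delta)\rfloor$, but infections need not increase $\gamma$ at all. Lemma~\ref{lem:infectiondelta} gives only $\gamma(A)\le\gamma(A+v)\le\gamma(A)+\Delta$; the lower bound on the increment is zero, not something positive. So the ``non-negative infection contribution'' in your generator calculation may well be identically zero, and you are left with $\mathcal{L}V(A)\ge -(C-1)C^{\ell(A)-1}/2$, which is \emph{not} $\ge -1$. The large cut guaranteed by Lemma~\ref{lem:expcut} tells you infections are frequent, but frequent infections that do not move $\gamma$ give no upward drift in your potential. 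The same problem kills the alternative gambler's-ruin phrasing: the ``up'' transition probability for $\ell$ is not bounded away from zero.

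The paper sidesteps this by running the birth--death comparison on $|I_t|$ rather than on $\gamma(I_t)$: every infection increases $|I_t|$ by exactly one, so the upward rate really is $\cut(I_t)$. The price is that one then needs to convert a range of $|I_t|$ into a lower bound on $\gamma(I_t)$, and this is where Lemma~\ref{lem:deltasize} (which you do not invoke) enters: for $\gamma(I_0)/(3\Delta)\le|I_t|\le 2\gamma(I_0)/(3\Delta)$, part~(i) gives $\gamma(I_t)\le 2\gamma(I_0)/3<W$, and then part~(iii) gives $\gamma(I_t)\ge \gamma(I_0)/3-E\Delta$. Lemma~\ref{lem:expcut} now yields $\cut(I_t)\ge \gamma(I_0)/3-(3E+4)\Delta$, and a standard reflected birth--death bound finishes the argument. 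The three pieces of slack you anticipated are exactly $E\Delta$ from Lemma~\ref{lem:deltasize}(iii), $2(E+2)\Delta$ from Lemma~\ref{lem:expcut}, and a factor~$3$ from the choice of window, which together produce the $(9E+12)\Delta$.
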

\begin{proof}
We define a process $V_t$ which is coupled with the process $I_t$ as follows. 
\begin{equation}
V_t=
\begin{cases}
|I_t|, & \text{ if }  |I_t| \leq \left \lfloor \frac{2\gamma(I_0)}{3\Delta} \right \rfloor,\\
 \left \lfloor \frac{2\gamma(I_0)}{3\Delta}\right \rfloor, & \text{ if }  |I_t| > \left \lfloor\frac{2\gamma(I_0)}{3\Delta}\right \rfloor.
\end{cases}
\end{equation}
The dynamics of $V_t$ {are as follows.} If $ i < \lfloor 2\gamma(I_0)/3\Delta)\rfloor$, then
\begin{align}
V_t:&\ i \rightarrow i+1, \text{  with rate  }  c(I_t), \nonumber\\ 
V_t:&\  i \rightarrow i-1, \text{ with rate  } r.\nonumber
\end{align}
{Furthermore,} if $i = \lfloor 2{\gamma(I_0) }/(3\Delta))\rfloor$, then
\[
V_t: \ i\rightarrow i-1,  \text{  with rate  } {r(I_t)},
\]
where ${r(I_t)} \leq r$. 

{Consider the stopping time}
\begin{align*}
{\tau^*}&= \inf\left\{t\geq 0 : |I_t|\leq \left \lfloor \frac{\gamma(I_0)}{3\Delta} \right \rfloor\right\}
\\
&=\inf\left\{t\geq 0 : V_t\leq \left \lfloor \frac{\gamma(I_0)}{3\Delta} \right \rfloor\right\}.
\end{align*}
For every sample path, ${\tau^*}\leq \tau$.  Therefore, 
\[
\mathbb{E}_{I_0}[\tau] \geq \mathbb{E}_{I_0}[{\tau^*}].
\]

{Suppose now that $|I_t|$ satisfies}
$$ \frac{\gamma(I_0)}{3\Delta} {\leq |I_t| \leq} \frac{2\gamma(I_0)}{3\Delta}.$$
Then, by parts {(i) and (iii) of} Lemma \ref{lem:deltasize}, we obtain 
that  
$$ \frac{\gamma(I_0)}{3}-E \Delta {\leq \gamma(I_t) \leq} \frac{2\gamma(I_0)}{3}.$$
Furthermore,  
{Lemma \ref{lem:expcut} implies that}
\[
c(I_t) \geq \gamma(I_t)-2(E+2)\Delta \geq \frac{\gamma(I_0)}{3}-(3E+4) \Delta.
\]
{It follows that} the process $V_t$ stochastically dominates a process $Y_t$, {described in the Appendix}, with parameters $\lambda=r$, $\mu=\gamma(I_0)/3-(3E+4)\Delta$, and $L=\gamma(I_0)/3\Delta$.
{Therefore, using {Eq.~\eqref{eq:lowerrandom}} in the Appendix,}
\[
\mathbb{E}_{I_0}[\tau] \geq \frac{1}{2r}\left( \left( \frac{\gamma(I_0)-(9 E+12) \Delta}{3r} \right)^{\frac{\gamma(I_0)}{3\Delta}-1}-1 \right).
\]
\end{proof}

Note that when $3r<\gamma(I_0)-9E\Delta {-12\Delta}$, the optimal expected extinction time scales exponentially in the resilience of the set of the  initially infected nodes.
 In \cite{DOTcdc14} and \cite{DOTtnse14}, we {focused on} the case where $I_0=V$ ({the} worst case) and proved that {if} the CutWidth {of the graph} is a sublinear function of the number of nodes, {and} if $r=o(n)$, {then, the} expected extinction time  is $o(n)$. In contrast, the following result considers the case where $W$ scales linearly  in the number of nodes and provides an exponential lower bound on the expected extinction time. 
Specifically,  using Theorem \ref{thrm:monotonepath} {to replace $\gamma(V)$ by $W$, and using also the definition of $E$,} we can write our lower bound as
\[
\mathbb{E}_{V}[\tau] \geq \frac{1}{{2}r}\left( \left( \frac{19 W-9 n \Delta {30}\Delta }{3r} \right)^{W/3\Delta-1}-1 \right),
\]
{and we obtain the  corollary that follows.}

\,
\,

\begin{corollary}
	{Fix 	{a} constant $C>1$ and consider those graphs for which}
	{
	 $$W\geq\frac{9C}{19}  n\Delta.$$} Moreover, assume that $r=o(n)$. Then,
	\[
\mathbb{E}_{V}[\tau] =\Omega (2 ^ n).
\]
\end{corollary}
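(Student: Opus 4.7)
The plan is to apply the explicit bound
\[
\mathbb{E}_{V}[\tau] \geq \frac{1}{2r}\left( \left( \frac{19W-9n\Delta-30\Delta}{3r} \right)^{W/(3\Delta)-1}-1 \right)
\]
displayed just before the corollary, which follows from Theorem \ref{thrm:exponential} applied with $I_0=V$ and the identity $\gamma(V)=W$ of Theorem \ref{thrm:monotonepath}. Two tasks remain: verifying that the hypothesis $\gamma(V)\geq \Delta(9E+12)+3r$ holds under the stated assumptions, and showing that the right-hand side is $\Omega(2^n)$.

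For the hypothesis, I would substitute $E=n+2-2W/\Delta$ from Eq.~\eqref{eq:e}, rewriting the condition as $19W\geq (9n+30)\Delta+3r$. Under $W\geq \tfrac{9C}{19}n\Delta$ with $C>1$, this reduces to $9(C-1)n\Delta\geq 30\Delta+3r$, which holds for all sufficiently large $n$ because the left side grows linearly in $n$ while the right side is $O(\Delta)+o(n)$.

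For the lower bound, the assumption on $W$ gives $19W-9n\Delta\geq 9(C-1)n\Delta$, so the base
\[
B_n := \frac{19W-9n\Delta-30\Delta}{3r} \geq \frac{9(C-1)n\Delta-30\Delta}{3r}
\]
diverges as $n\to\infty$ by virtue of $r=o(n)$. The exponent $W/(3\Delta)-1$ is at least $cn-1$ with $c:=3C/19>0$. Choosing $n$ large enough that $B_n\geq 2^{2/c}$, we obtain $B_n^{cn-1}\geq 2^{2n-2/c}$. Dividing by $2r$ and discarding the harmless $-1$ inside the parentheses yields a lower bound of order $4^n/r$ (up to constants). Since $r=o(n)=o(2^n)$, this quantity is $\Omega(2^n)$.

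I do not anticipate any serious obstacles; the corollary is essentially a computation. The one delicate point is that merely having $B_n$ bounded below by a constant greater than $1$ would give an exponential bound with a base depending on $C$, potentially smaller than $2$ when $C$ is close to $1$. To recover the clean $2^n$ rate one must use that $B_n\to\infty$, which is precisely what the sublinearity assumption $r=o(n)$ provides, allowing the exponent's constant $c$ to be absorbed by raising $B_n$ to a power slightly larger than $1/c$.
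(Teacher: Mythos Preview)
Your proposal is correct and follows exactly the route the paper intends: the paper states the corollary without proof, as an immediate consequence of the displayed bound $\mathbb{E}_V[\tau]\geq \frac{1}{2r}\big((\frac{19W-9n\Delta-30\Delta}{3r})^{W/(3\Delta)-1}-1\big)$, and your computation fills in the omitted details. One minor phrasing issue: when you write that the right-hand side of $9(C-1)n\Delta\geq 30\Delta+3r$ is ``$O(\Delta)+o(n)$,'' this is slightly ambiguous since $\Delta$ may grow with $n$; the clean way is to observe that $30\Delta$ is dominated by $9(C-1)n\Delta$ term-by-term, and $3r=o(n)\leq o(n\Delta)$ since $\Delta\geq 1$.
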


 \bibliography{epidemics}

\begin{thebibliography}{10}

\bibitem{BiSe91}
D.~Bienstock and P.~Seymour.
\newblock Monotonicity in graph searching.
\newblock {\em Journal of Algorithms}, 12(2):239 -- 245, 1991.

\bibitem{Borgs10}
C.~Borgs, J.~Chayes, A.~Ganesh, and A.~Saberi.
\newblock How to distribute antidote to control epidemics, 2010.

\bibitem{chung}
F.~R.~K. Chung, P.~Horn, and A.~Tsiatas.
\newblock Distributing antidote using pagerank vectors.
\newblock {\em Internet Mathematics}, 6(2):237--254, 2009.

\bibitem{Cohen}
R.~Cohen, S.~Havlin, and D.~ben Avraham.
\newblock Efficient immunization strategies for computer networks and
  populations.
\newblock {\em Phys. Rev. Lett.}, 91:247901, 2003.
\newblock http://journals.aps.org/prl/abstract/10.1103/PhysRevLett.91.247901.

\bibitem{DOTcdc14}
K.~Drakopoulos, A.~Ozdaglar, and J.~N. Tsitsiklis.
\newblock An efficient curing policy for epidemics on graphs.
\newblock In {\em 2014 IEEE 53rd Annual Conference on Decision and Control
  (CDC)}, pages 4447--4454, Dec 2014.

\bibitem{DOTtnse14}
K.~Drakopoulos, A.~Ozdaglar, and J.~N. Tsitsiklis.
\newblock An efficient curing policy for epidemics on graphs.
\newblock {\em IEEE Transactions on Network Science and Engineering},
  1(2):67--75, January 2015.

\bibitem{DOT14}
K.~Drakopoulos, A.~Ozdaglar, and J.~N. Tsitsiklis.
\newblock {When is a network epidemic hard to eliminate?}
\newblock {\em submitted to Mathematics of Operations Research}, 2015.

\bibitem{gourdin}
E.~Gourdin, J.~Omic, and P.~V. Mieghem.
\newblock Optimization of network protection against virus spread.
\newblock pages 86--93, 2011.

\bibitem{LaPaugh93}
A.~S. LaPaugh.
\newblock Recontamination does not help to search a graph.
\newblock {\em J. ACM}, 40(2):224--245, Apr. 1993.

\bibitem{Peres}
D.~A. Levin, Y.~Peres, and E.~L. Wilmer.
\newblock {\em {Markov chains and mixing times}}.
\newblock American Mathematical Society, 2006.

\bibitem{Pars78}
T.~D. Parsons.
\newblock The search number of a connected graph.
\newblock In {\em In Proceedings of the Ninth Southeastern Conference on
  Combinatorics, Graph Theory, and Computing (Florida Atlantic Univ., Boca
  Raton, Fla., 1978), Congress. Numer., XXI}, pages 549--554, 1978.

\bibitem{preciado}
V.~M. Preciado, M.~Zargham, C.~Enyioha, A.~Jadbabaie, and G.~J. Pappas.
\newblock Optimal vaccine allocation to control epidemic outbreaks in arbitrary
  networks.
\newblock {\em CoRR}, abs/1303.3984, 2013.

\bibitem{Stee00}
J.~Steele.
\newblock {\em Stochastic calculus and financial applications}.
\newblock Springer, 2000.

\end{thebibliography}
 \bibliographystyle{abbrv}

\appendix
{We first recall a basic result on} the standard continuous random walk on the integers. {Specifically, let $Z_t$ denote the state of a Markov process with the following dynamics:
\begin{align}
&Z_t: i \rightarrow i+1, \,\, \text{with rate} \,\,\mu,  \nonumber \\ 
&Z_t: i \rightarrow i-1, \,\,  \text{with rate}\,\, \lambda.
\label{eq:ytprocess}
\end{align}}

{Fix some integers $M$ and $L$ with}
$0<M<L$,  
{and let $\mathbb{P}_{M}$ be the probabilty measure that describes the process when initialized at $Z_0=M$.}
Let $$\tau_{L}=\inf\{t: Z_t=0 \,\, \text{or}\,\, Z_t=L\}$$ denote the first time that the process $Z_t$ visits state $0$ or $L$, {which is}  a \emph{stopping time}. Moreover, let $$\hat{\tau}=\inf\{t: Z_t=0 \}$$ denote the {first} time that $Z_t$ hits $0$. 
 
{The following result is standard; see, e.g.,  Section 2.1 of \cite{Peres} 
or Section 2.3 of \cite{Stee00}.}

\begin{lemma}\label{lem:randwalk}
Consider the process $Z_t$ and the stopping times $\tau_L$ and $\hat{\tau}$. Then,
\begin{equation}\label{eq:upprob}
\mathbb{P}_{M}({Z}_{\tau_{L}}=L)=\frac{1-(\lambda/\mu)^M}{1-(\lambda/\mu)^L},
\end{equation}
\end{lemma}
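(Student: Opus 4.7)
The plan is to reduce the continuous-time problem to the classical discrete-time gambler's ruin. Since the event $\{Z_{\tau_L} = L\}$ depends only on the sequence of jump directions, not on their timing, I would first pass to the embedded jump chain $\{\tilde Z_n\}$. This is a discrete-time random walk on $\{0, 1, \ldots\}$ whose increments are $+1$ with probability $p = \mu/(\mu+\lambda)$ and $-1$ with probability $q = \lambda/(\mu+\lambda)$, and the probability $\mathbb{P}_M(Z_{\tau_L} = L)$ coincides with the corresponding exit probability for $\tilde Z_n$ started at $M$.

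Next, I would construct a bounded martingale by finding a harmonic function for the embedded chain. Solutions of the recurrence $g(i) = p\, g(i+1) + q\, g(i-1)$ form a two-dimensional space spanned by $g(i) = 1$ and $g(i) = (q/p)^i = (\lambda/\mu)^i$. The convenient choice $g(i) = 1 - (\lambda/\mu)^i$ satisfies $g(0) = 0$, and the stopped process $\{g(\tilde Z_{n \wedge \tau_L})\}$ is a bounded martingale taking values in the compact interval with endpoints $g(0)$ and $g(L)$.

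Finally, I would apply the optional stopping theorem at the stopping time $\tau_L$ to obtain
\[
g(M) = \mathbb{E}_M\bigl[g(\tilde Z_{\tau_L})\bigr] = g(L)\,\mathbb{P}_M(Z_{\tau_L} = L) + g(0)\,\mathbb{P}_M(Z_{\tau_L} = 0),
\]
and, since $g(0) = 0$, solve for the hitting probability to recover
\[
\mathbb{P}_M(Z_{\tau_L} = L) = \frac{g(M)}{g(L)} = \frac{1 - (\lambda/\mu)^M}{1 - (\lambda/\mu)^L}.
\]

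There is no substantive obstacle; this is a textbook gambler's ruin argument. The only routine points I would need to verify are: (a) that $\tau_L$ is almost surely finite, which follows immediately because the embedded chain on the finite set $\{1, \ldots, L-1\}$ exits through $\{0, L\}$ in finite expected time; (b) that the boundedness of the martingale is enough to justify optional stopping at $\tau_L$; and (c) that the degenerate case $\lambda = \mu$ is either excluded by hypothesis (since later the result is applied with $\mu > \lambda$) or handled separately by using the harmonic function $g(i) = i$ in place of $1 - (\lambda/\mu)^i$, which yields the limiting value $M/L$.
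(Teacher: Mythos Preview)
Your proposal is correct and is the standard martingale proof of the gambler's ruin formula. The paper does not actually prove this lemma: it simply states that the result is standard and cites textbook references (Section~2.1 of \cite{Peres} and Section~2.3 of \cite{Stee00}), so your argument in fact supplies more detail than the paper itself.
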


{Consider now a related} Markov process $Y_t$, whose {transition rates are} as follows:
\begin{align}
&Y_t: i \rightarrow i+1, \,\, \text{with rate} \,\,\mu,  \nonumber \\ 
&Y_t: i \rightarrow i-1, \,\,  \text{with rate}\,\, \lambda,
\label{eq:ytprocess}
\end{align}
for $i \in \{1,\ldots,L-1\}$ while
\[
Y_t: i \rightarrow i-1, \,\,  \text{with rate}\,\, \lambda,
\]
for $i=L$ and
\[
Y_t: i \rightarrow i+1, \,\,  \text{with rate}\,\, \mu,
\]
for $i=0$.

{We are looking for a lower bound on the expected time that it takes for the process $Y_t$ to hit $0$ for the first time, assuming that it starts at $L-1$.
Let ${p}$ be the  probability that $Y_t$ hits level $L$ before hitting $0$ starting from state $L-1$, which is given by Lemma \ref{lem:randwalk}, with $M=L-1$. We consider the case where $\lambda<\mu$, so that $p>1/2$. 
Each time that the process is at state $L-1$, the process regenerates, and we have a new trial, which succeeds in hitting state 0 before state $L$, with the same probability $1-p$. Let $N$ be the number of trials and note that its expected value is 
$1/(1-p)$. In between trials, there needs to be a transition from state $L$ to state $L-1$, whose expected time is $1/\lambda$. Thus, the total expected time elapsed until state 0 is hit for the first time is $\mathbb{E}[N-1]/\lambda=p/(1-p)\lambda$. Using Lemma \ref{lem:randwalk} and some straightforward algebra, we obtain that this expected time is at least as large as
\begin{equation}\label{eq:lowerrandom}
\frac{1}{2}\left( \left( \frac{\mu}{\lambda} \right)^{L-1}-1 \right)\frac{1}{\lambda}.
 \end{equation}
}

\end{document}